\setlist{noitemsep}
\setlist[enumerate]{label=(\arabic*)}
\let\oldbibliography\thebibliography
\renewcommand{\thebibliography}[1]{%
  \oldbibliography{}%
  \small%
  \setlength{\itemsep}{0pt}%
}
\def\@cite#1#2{[{{\bfseries#1}\if@tempswa , #2\fi}]} 
\renewcommand{\@biblabel}[1]{[{\bfseries{#1}}]~} 
\theoremstyle{plain}
\newtheorem{theo}{Theorem}[section]
\newtheorem{prop}[theo]{Proposition}
\newtheorem{lemm}[theo]{Lemma}
\theoremstyle{definition}
\newtheorem{defi}[theo]{Definition}
\newtheorem{exam}[theo]{Example}
\newtheorem{rema}[theo]{Remark}
\renewcommand{\leq}{\leqslant}
\renewcommand{\geq}{\geqslant}
\newcommand{\bbN}{{\mathbb N}}
\newcommand{\bbZ}{{\mathbb Z}}
\newcommand{\bbR}{{\mathbb R}}
\newcommand{\mcA}{\mathcal A}
\newcommand{\mcB}{\mathcal B}
\newcommand{\mcD}{\mathcal D}
\newcommand{\mcG}{\mathcal G}
\newcommand{\mcM}{\mathcal M}
\newcommand{\mcQ}{\mathcal Q}
\newcommand{\mcR}{\mathcal R}
\DeclareTextFontCommand{\tdef}{\itshape\bfseries} 
\title{\textbf{
Undecidable properties of self-affine sets \\
and multi-tape automata}}
\author[1,2]{Timo Jolivet}
\author[1]{Jarkko Kari}
\date{}
\affil[1]{%
Department of Mathematics, University of Turku, Finland
}
\affil[2]{%
LIAFA, Universit\'e Paris Diderot, France
}
\begin{document}
\maketitle

\begin{abstract}
We study the decidability of the topological properties
of some objects coming from fractal geometry.
We prove that having empty interior is undecidable
for the sets defined by two-dimensional graph-directed iterated function systems.
These results are obtained by studying a particular class of self-affine sets associated with multi-tape automata.
We first establish the undecidability of some language-theoretical properties of such automata,
which then translate into undecidability results about their associated self-affine sets.
\end{abstract}

\section{Introduction}
\label{sect:intro_undfrac}

A classical way to define fractals
is to use an \tdef{iterated function system (IFS)},
specified by a finite collection of maps $f_1, \ldots, f_n : \bbR^d \rightarrow \bbR^d$
which are all \tdef{contracting}:
there exists $0 \leq c < 1$ such that $\|f_i(x) - f_i(y)\| \leq c \|x-y\|$ for all $x, y \in \bbR^d$.
The associated fractal set, called the \tdef{attractor}
of the IFS, is the unique nonempty compact set $X$ such that
$X = \bigcup_{i=1}^n f_i(X)$.
Such a set $X$ always exists and is unique thanks to a famous result of Hutchinson~\cite{Hut81},
based on an application of the Banach fixed-point theorem; see also~\cite{Fal03} or~\cite{Bar93}.
For example, the classical Cantor set can be defined as the unique compact set $X \subseteq \bbR$
satisfying the set equation $X = \frac{1}{3}X \cup (\frac{1}{3}X + \frac{2}{3})$.
Two other examples are given in Figure~\ref{fig:exifs}.

\begin{figure}
\centering
\includegraphics[height=50mm]{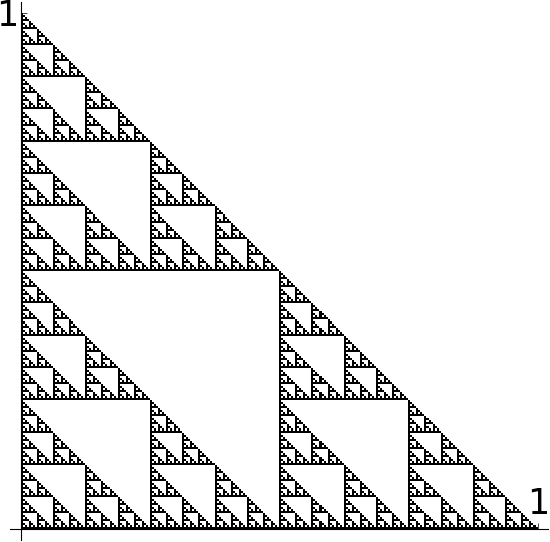}
\hfil
\includegraphics[height=50mm]{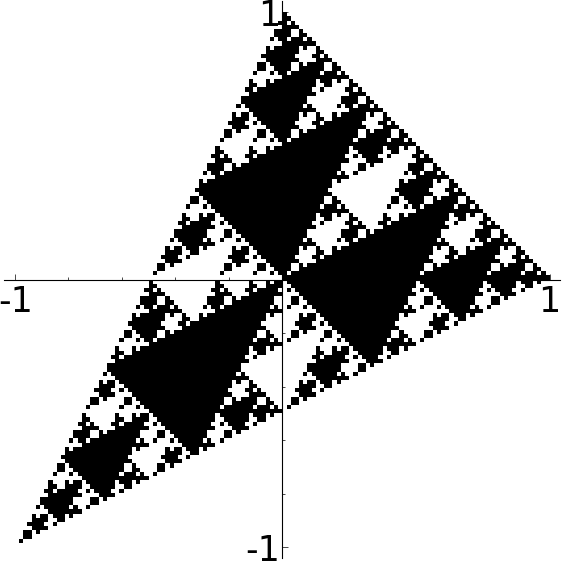}
\caption[Two self-affine sets]{
Two self-affine sets defined by
$\smash{X = \bigcup_{v \in \mcD} M^{-1}(X + v)}$,
where $M = (\begin{smallmatrix}2 & 0 \\ 0 & 2\end{smallmatrix})$
and $\mcD = \{(\begin{smallmatrix}0 \\ 0\end{smallmatrix}),
              (\begin{smallmatrix}1 \\ 0\end{smallmatrix}),
              (\begin{smallmatrix}0 \\ 1\end{smallmatrix})\}$ (left),
and $\mcD = \{(\begin{smallmatrix}0 \\ 0\end{smallmatrix}),
              (\begin{smallmatrix}1 \\ 0\end{smallmatrix}),
              (\begin{smallmatrix}0 \\ 1\end{smallmatrix}),
              (\begin{smallmatrix}-1 \\ -1\end{smallmatrix})\}$ (right).
The set on the left is the Sierpi\'nski triangle and has empty interior.
The set on the right is an example of a self-affine tile with nonempty interior
(see~\cite{LW96}).
}
\label{fig:exifs}
\end{figure}

A question of interest is to determine when the fractal set $X$ has nonempty interior.
This question arises in several areas,
including tiling theory, dynamical systems, number theory and Fourier analysis
(see~\cite{W99,LLR13}
and references therein).
A well studied case is when the contracting maps are affine mappings
of the form $f_i(x) = M^{-1}(x + v_i)$ where $v_i \in \bbZ^d$
and $M$ is an integer expanding matrix which is common to all the $f_i$,
like in the examples of Figure~\ref{fig:exifs}.
In this case, having nonempty interior is equivalent with having nonzero Lebesgue measure,
and there are efficient algorithms to decide this~\cite{GY06,BK11cool}.

Much less is known in the more general case of \tdef{self-affine attractors},
where the maps $f_i$ are only restricted to be affine
(of the form $f_i = M_ix~+~v_i$ where the~$M_i$ are real matrices and $v_i \in \bbR^d$).
No algorithm is known to decide nonempty interior in this case,
and specific results such as computation of Hausdorff dimension
are known only for some very specific families of self-affine sets~\cite{Bed84,McM84,Fra13}.

\paragraph{Our results}
We are interested in the following question:
to what extent can we decide if a self-affine set has nonemtpy interior?

We will answer this question by an undecidability result
for a natural generalization of iterated function systems,
which consist of a \emph{finite system} of equations instead of just one,
hence defining a finite number of attractors.
This is formalized in the following definition:
a $d$-dimensional \tdef{graph-directed iterated function system (GIFS)}
is a directed graph
in which each edge $e$ is labelled by a contracting mapping $f_e : \bbR^d \rightarrow \bbR^d$.
The \tdef{attractors} of the GIFS
are the unique nonempty compact sets $\{X_q\}_{q \in \mcQ}$ such that
\[
X_q = \bigcup_{r \in \mcQ} \bigcup_{e \in E_{q,r}} f_e(X_r),
\]
where $\mcQ$ is the set of vertices of the directed graph defining the GIFS,
and $E_{q,r}$ denote the set of edges from vertex $q$ to vertex $r$.
Again, such a collection of compact sets $\{X_q\}_{q \in \mcQ}$
exists and is unique~\cite{Fal97}.
Fractals defined by GIFS are widely used to define various self-similar tilings of the plane,
the study of which have applications in physics, dynamics and number theory.
Note that the case of single-vertex graphs corresponds to
classical iterated function systems.

Our main result, Theorem~\ref{theo:und_emptyint},
states that it is undecidable if the attractors
of a $2$-dimensional, $3$-state affine GIFS have empty interior.
We follow an approach initiated by Dube~\cite{Dub93}
by associating self-affine sets with finite multi-tape automata.
Then we relate some properties of the automaton with topological properties of its associated attractor,
and we obtain the undecidability of the latter by proving the undecidability of the former.
The original motivation of~\cite{Dub93} is to prove that it is undecidable
if a if the attractor of a rational $2$-dimensional affine IFS
intersects the diagonal $\{(x,x) : x \in [0,1]\}$.

In Section~\ref{sect:multi-tape} we define multi-tape automata
and we consider a variant of the Post correspondence problem in Section~\ref{subsec:pcp},
which we use in Section~\ref{subsec:undaut} to prove
undecidability results about multi-tape automata.
We then relate some language-theoretical properties of an automaton
with some topological properties of its attractor in Section~\ref{sect:mtagigs}.
The main results are stated in~Section~\ref{sect:mainresults_undfrac}.

\paragraph{Acknowledgements.}
Research supported by the Academy of Finland Grant 131558
and by project Fractals and Numeration ANR-12-IS01-0002.

\section{Multi-tape automata}
\label{sect:multi-tape}

\subsection{Definitions}

A \tdef{$d$-tape automaton} $\mcM$ on alphabet $\mcA = A_1 \times \cdots \times A_d$
is defined by a finite set of \tdef{states} $\mcQ$,
and a finite set of \tdef{transitions} $\mcR \subseteq \mcQ \times \mcQ \times (A_1^+ \times \cdots \times A_d^+)$.
A $d$-tape automaton on state $\mcQ$ is conveniently represented by a directed graph
with vertex set $\mcQ$ and an edge $(q,r)$ labelled by $w_1 | \cdots | w_d$
for every transition $(q,r,(w_1, \ldots, w_d))$.
This is illustrated in Example~\ref{exam:mta}.

A \tdef{configuration} is an infinite sequence $c \in \mcA^\bbN = (A_1 \times \cdots \times A_d)^\bbN$.
For $k \in \{1, \ldots, d\}$, the \tdef{$k$th tape} of $c$ refers to the infinite sequence $((c_n)_k)_{n \in \bbN}$,
which is an infinite concatenation of words in $A_k^\star$.
For convenience, configurations will be denoted by writing the tape components separated by the symbol ``|''.
For example, $00\cdots \,|\, 11\cdots \,|\, 00\cdots$
denotes the $3$-tape configuration $(0,1,0),(0,1,0),\ldots \in (\{0,1\} \times \{0,1\} \times \{0,1\})^\bbN$.

Let $q$ be a state of $\mcM$.
A configuration $c \in A^\bbN$ is \tdef{$q$-accepted} by $\mcM$
if there exists an infinite sequence of transitions
$((q_n,r_n,(w_{n,1},\ldots,w_{n,d})))_{n \geq 1}$
such that $q_1 = q$, $r_n = q_{n+1}$ for all $n \geq 1$,
and for every $k \in \{1, \ldots, d\}$,
the infinite word $w_{1,k}w_{2,k}\ldots$
is equal to the $k$th tape of $c$
(that is, $w_{1,k}w_{2,k}\ldots = (c_1)_k(c_2)_k \ldots$).
Such an infinite sequence of transitions will sometimes be referred to as
a \tdef{run of $\mcM$ starting at~$q$}.
Note that we also forbid $\varepsilon$-transitions
as the words $w_1, \ldots, w_d$ used in transitions are nonempty,
so that each infinite run provides an infinite word on every tape.

\begin{exam}
\label{exam:mta}
Consider the following $2$-tape, $2$-state automaton on alphabet $\mcA = \{0,1\} \times \{0,1,2\}$,
with state set $\mcQ = \{X,Y\}$ and transitions given by the following.
\begin{center}
\begin{tikzpicture}[x={13mm}, y={13mm}]
\draw (0,2) node[minimum size=6mm, inner sep=0pt,       draw,circle,thick, very thick] (X) {$X$};
\draw (2,2) node[minimum size=6mm, inner sep=0pt,       draw,circle,thick, very thick] (Y) {$Y$};
\draw[->, very thick] (X) .. controls +(150:10mm) and +(210:10mm) .. node [left] {\small$0 | 22$} (X);
\draw[->, very thick] (Y) .. controls +(10:10mm) and +(70:10mm) .. node [right] {\small$1|001$} (Y);
\draw[->, very thick] (Y) .. controls +(-70:10mm) and +(-10:10mm) .. node [right] {\small$20|1$} (Y);
\draw[<-, very thick] (Y) .. controls +(150:10mm) and +(30:10mm) .. node [above] {\small$10|11$} (X);
\draw[<-, very thick] (X) .. controls +(-30:10mm) and +(-150:10mm) .. node [below] {\small$110|2$} (Y);
\end{tikzpicture}
\end{center}
It is easy to check that the configuration $00\cdots | 22\cdots$
is not $Y$-accepted
but is $X$-accepted by $\mcM$ (by repeatedly using the transition $(X,X,(0,22))$).
However, giving a precise description of the set of configurations
which are accepted by $\mcM$ seems difficult.
\end{exam}

\begin{rema}
Multi-tape automata are very powerful computational devices
because of the fact that the words $w_1, \ldots, w_d$ in a transition are allowed to have different lengths.
This is the fundamental feature that will allow us to establish several undecidability results about multi-tape automata
later in this section.
On the other hand, if the words $w_1, \ldots, w_d$ all have the same length in every transition,
then it is easy to see that this model is not more powerful than a classical finite automaton on a product alphabet.
\end{rema}

\subsection{Post correspondence problems}
\label{subsec:pcp}

The undecidability results of this article are all derived from the undecidability of the following decision problems.
The \tdef{Post correspondence problem (PCP)} is:
    given $n$ pairs of nonempty words $(u_1,v_1), \ldots, (u_n,v_n)$,
    decide if there exist $m\geq1$ and a word $i_1 \cdots i_m$ such that
    $u_{i_1} \cdots u_{i_m} = v_{i_1} \cdots v_{i_m}$.
This is a well-known undecidable problem~\cite{Pos46}.

We will need a slight variant of PCP,
the \tdef{prefix Post correspondence problem (prefix-PCP)}:
    given $n$ pairs of nonempty words $(u_1,v_1), \ldots, (u_n,v_n)$,
    decide if there exist $m,m' \geq 1$ and two words $i_1 \cdots i_m$ and $i_1 \cdots i_{m'}$ such that
    $\smash{u_{i_1} \cdots u_{i_m} = v_{i_1} \cdots v_{i_{m'}}}$
    and one of the two words $i_1 \cdots i_m$ and $i_1 \cdots i_{m'}$ is a prefix of the other.

A positive PCP always yields a positive prefix-PCP instance (by taking $m = m'$),
but the converse is not always true.
For example, the instance $(u_1,v_1) = (a,abb)$, $(u_2,v_2) = (bb,aa)$
admits the prefix-PCP solution given by $u_1u_2u_1u_1 = v_1v_2 = abbaa$,
that is, $m = 4$, $m' = 2$ and the two words $i_1i_2i_3i_4 = 1211$ and $i_1i_2 = 12$.
However, this instance cannot admit any PCP solution because no pair of words ends by the same symbol.

\begin{lemm}
\label{lemm:und_prefixPCP}
Prefix-PCP is undecidable.
\end{lemm}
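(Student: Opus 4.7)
The plan is to reduce the classical Post correspondence problem to prefix-PCP by a simple delimiter trick. Given an instance of PCP consisting of pairs $(u_1,v_1), \ldots, (u_n,v_n)$ over some alphabet $\Sigma$, I would pick a fresh symbol $\$ \notin \Sigma$ and construct the prefix-PCP instance with pairs $(u_i\$, v_i\$)$ for $i = 1, \ldots, n$. The idea is that the end-markers force any prefix-PCP solution to be, in fact, a genuine PCP solution, while trivially any PCP solution yields a prefix-PCP solution with $m = m'$.

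The forward direction is immediate: if $i_1 \cdots i_m$ is a PCP solution of the original instance, then the same index sequence (taking $m = m'$) satisfies $u_{i_1}\$ \cdots u_{i_m}\$ = v_{i_1}\$ \cdots v_{i_m}\$$, and is a valid prefix-PCP solution for the modified instance.

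For the converse, suppose $i_1 \cdots i_m$ and $i_1 \cdots i_{m'}$ form a prefix-PCP solution of the modified instance, so that $u_{i_1}\$ \cdots u_{i_m}\$ = v_{i_1}\$ \cdots v_{i_{m'}}\$$ with, say, $m' \leq m$ and $i_1\cdots i_{m'}$ a prefix of $i_1 \cdots i_m$. Since $\$$ occurs in neither the $u_j$'s nor the $v_j$'s, counting the occurrences of $\$$ on both sides yields $m = m'$. Stripping the $\$$ symbols then gives $u_{i_1} \cdots u_{i_m} = v_{i_1} \cdots v_{i_m}$, which is a solution of the original PCP instance. Since PCP is undecidable by Post's theorem, so is prefix-PCP.

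I do not anticipate a real obstacle here: the only thing to verify carefully is that $\$$ is freshly added and therefore cannot appear inside any $u_i$ or $v_i$, which is what makes the counting argument work and ensures $m = m'$. The nonemptiness requirement on the pairs $(u_i\$, v_i\$)$ is automatic since $\$$ is appended to each side.
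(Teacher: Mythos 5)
There is a genuine gap: the forward direction of your reduction fails. If $i_1\cdots i_m$ is a PCP solution, it is in general \emph{not} true that $u_{i_1}\texttt{\$}\, u_{i_2}\texttt{\$} \cdots u_{i_m}\texttt{\$} = v_{i_1}\texttt{\$}\, v_{i_2}\texttt{\$} \cdots v_{i_m}\texttt{\$}$: both sides are the same word $u_{i_1}\cdots u_{i_m}=v_{i_1}\cdots v_{i_m}$, but cut into blocks at different positions, so inserting \texttt{\$} at the block boundaries yields different words unless the partial concatenations have equal length at every step. Concretely, take $(u_1,v_1)=(a,ab)$ and $(u_2,v_2)=(bb,b)$: the index word $12$ is a PCP solution ($abb=abb$), yet the modified instance $(a\texttt{\$},ab\texttt{\$})$, $(bb\texttt{\$},b\texttt{\$})$ has no prefix-PCP solution at all, since whichever pair is used first the two sides already differ at the second letter (\texttt{\$} against $b$, or $b$ against \texttt{\$}). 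In fact, in any equality $u_{i_1}\texttt{\$}\cdots u_{i_m}\texttt{\$}=v_{i_1}\texttt{\$}\cdots v_{i_{m'}}\texttt{\$}$ the occurrences of \texttt{\$} must sit at the same positions on both sides, which forces $m=m'$ and $u_{i_j}=v_{i_j}$ for every $j$; hence your constructed instance has a solution if and only if some $u_i=v_i$, a trivially decidable condition, so the map cannot be a reduction from PCP. (Your converse direction is correct, but that alone proves nothing about undecidability.)

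The paper's proof is designed precisely to avoid this alignment problem. Instead of appending a marker to each \emph{block}, it interleaves a separator \texttt* between the individual \emph{letters}, placed asymmetrically on the two sides ($U_i=\texttt* x_1\texttt* \cdots \texttt* x_k$ versus $V_i=y_1\texttt* \cdots \texttt* y_\ell\texttt*$), together with starting pairs $(A_i,B_i)$ carrying a begin marker \texttt\# and an end pair $(Y,Z)=(\texttt*\texttt\#,\texttt\#)$. Because the separators sit between letters rather than between blocks, equality of the decorated concatenations is equivalent to equality of the original ones (no synchronization of block boundaries is needed), while the \texttt\# symbols force any prefix-PCP solution of the constructed instance to use $(A_{i_1},B_{i_1})$ first and $(Y,Z)$ exactly once on each side, which is what yields $m=m'$ and hence a genuine PCP solution in the reverse direction. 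If you want to keep a marker-based argument, you need this letter-level interleaving (or an equivalent device), not per-block end markers.
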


\begin{proof}
We reduce PCP to prefix-PCP.
Let $(u_1,v_1), \ldots, (u_n,v_n)$ be an instance of PCP on alphabet $\mcA$.
Let $\mcB = \mcA \cup \{\texttt\#, \texttt*\}$ be a new alphabet,
where $\texttt\#$ and $\texttt*$ are two new symbols not contained in $\mcA$.
We construct a prefix-PCP instance
$(A_1,B_1)$, \ldots, $(A_n,B_n)$, $(U_1,V_1)$, \ldots, $(U_n,V_n)$, $(Y, Z)$
on the new alphabet $\mcB$, defined by
\begin{align*}
A_i &= \texttt\# x_1 \texttt* x_2 \texttt* \cdots \texttt* x_k
 & U_i &= \texttt* x_1 \texttt* x_2 \texttt* \cdots \texttt* x_k
 & Y &= \texttt*\texttt\# \\
B_i &= \texttt\# y_1 \texttt* y_2 \texttt* \cdots \texttt* y_{\ell} \texttt*
 & V_i &= \phantom{\texttt*}\, y_1 \texttt* y_2 \texttt* \cdots \texttt* y_{\ell} \texttt*
 & Z &= \phantom{\texttt*}\texttt\#
\end{align*}
for all $i \in \{1, \ldots, n\}$, where
$u_i = x_1 \cdots x_n$ and $v_i = y_1 \cdots y_\ell$
and the $x_j, y_j$ are in $\mcA$.
We now prove that the PCP instance has a solution
if and only if the prefix-PCP instance has a solution.
Suppose that there exists a solution $i_1 \cdots i_m$ to the PCP instance,
that is $u_{i_1} \cdots u_{i_m} = v_{i_1} \cdots v_{i_m}$.
Then clearly the prefix-PCP also has a solution, given by
$A_{i_1} U_{i_2} \cdots U_{i_m} Y = B_{i_1} V_{i_2} \cdots V_{i_m} Z$.

Conversely, suppose that the prefix-PCP instance has a solution.
By construction, because of $\texttt\#$ and $\texttt*$,
there must exist a prefix-PCP solution of the form
$A_{i_1} U_{i_2} \cdots U_{i_m} Y = B_{i_1} V_{i_2} \cdots V_{i_{m'}} Z$,
where $i_1 \cdots i_m$ is a prefix of $i_1 \cdots i_{m'}$ or vice-versa.
But the pairs $(U_i,V_i)$ do not contain any $\texttt\#$,
so the pair $(Y,Z)$ is used exactly once,
both after $m$th pair and the $m'$th pair,
so $m = m'$ and the PCP instance has a solution.
\end{proof}

\subsection{Undecidable properties of multi-tape automata}
\label{subsec:undaut}

Let $\mcM$ be a $d$-tape automaton
on alphabet $\mcA$,
and let $q$ be a state of $\mcM$.
State $q$ is \tdef{universal}
if every sequence in $\mcA^\bbN$ is $q$-accepted by $\mcM$.
A finite sequence $x \in \mcA^\star$
is a \tdef{universal prefix}
for state $q$ if for every infinite sequence $y \in \mcA^\bbN$,
the infinite sequence $xy$ is $q$-accepted by $\mcM$.

\begin{exam}
\label{exam:punotu}
Let $\mcM$ be a $1$-tape, $1$-state automaton on alphabet $\{0,1\}$
with three transitions labelled by $1$, $10$ and $00$.
The single state of $\mcM$ is not universal because every sequence starting with $01$ is rejected,
but the word $1$ is a universal prefix:
any sequence starting with $1$ is accepted,
because any finite segment $10^n1$ is accepted by transitions
$1$, $00 \times k$, $1$ if $n=2k$
or $10$, $00 \times k$, $1$ if $n=2k+1$,
and any infinite tail of $0$'s or $1$'s is obviously accepted.
Hence there exist some multi-tape automata without universal states
but that admit universal prefixes.
The self-affine set associated with this automaton is discussed
in Example~\ref{exam:punotu_attractor}.
\end{exam}

\begin{theo}
\label{theo:unduniv}
It is undecidable whether
a given state of a given $d$-tape automaton is universal.
This problem remains undecidable if we restrict to $2$-tape automata with $3$ states.
\end{theo}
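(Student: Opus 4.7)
The plan is to reduce the prefix-PCP problem (undecidable by Lemma~\ref{lemm:und_prefixPCP}) to the universality problem for $2$-tape automata with $3$ states. Given an instance $((u_1, v_1), \ldots, (u_n, v_n))$ on an alphabet $\Sigma$, I will construct a $2$-tape automaton $\mcM$ with three states $q, p, s$ so that the state $q$ is universal in $\mcM$ if and only if the instance has \emph{no} solution. The undecidability of universality then follows immediately from the undecidability of prefix-PCP.

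The construction I have in mind uses the state $s$ as a ``universal sink'': it carries self-loops $(s, s, (a, b))$ for every letter pair, so that $s$ trivially accepts every configuration. The state $q$ comes equipped with two kinds of outgoing transitions: synchronous ones $(q, q, (u_i, v_i))$ for each PCP pair, which try to parse the two tapes as an ongoing putative PCP match, and escape transitions leading to $s$ (possibly routed through the auxiliary state $p$) that fire whenever the tape contents are incompatible with extending such a match. The state $p$ handles the asymmetry inherent in prefix-PCP, where the two index sequences $i_1 \cdots i_m$ and $i_1 \cdots i_{m'}$ may have different lengths: while inside $p$ the automaton is allowed to read further on one tape than on the other before either returning to $q$ or continuing on to $s$.

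Correctness of the reduction splits into two directions. If a solution $u_{i_1} \cdots u_{i_m} = v_{i_1} \cdots v_{i_{m'}}$ exists, I will exhibit an explicit configuration whose tapes begin with this common word and whose continuations are chosen so as to disable every escape transition at every finite stage; such a configuration is $q$-rejected, so $q$ is not universal. Conversely, assuming no solution exists, I will show that every configuration is $q$-accepted: at each stage, either a synchronous transition can fire and the parse continues, or the tape contents reveal a discrepancy that enables an escape transition to $s$, after which the sink handles the remaining suffix.

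The hard part will be calibrating the escape transitions so that they are enabled precisely when they should be --- liberal enough to render $q$ universal in the no-solution case, but restricted enough to leave every solution-encoding configuration rejected --- while fitting the whole construction into only three states. The flexibility of multi-tape transitions, whose two labels $w_1, w_2$ can have arbitrary and possibly very different lengths, is what makes this possible: a single auxiliary state $p$ suffices to absorb the potentially unbounded asymmetry between $m$ and $m'$ in a prefix-PCP solution, instead of our needing a separate state for each possible alignment.
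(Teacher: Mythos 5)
Your proposal is a plan rather than a proof: the entire difficulty of the theorem is concentrated in the step you explicitly defer (``calibrating the escape transitions''), and the architecture you chose makes that step problematic. With synchronous transitions $(q,q,(u_i,v_i))$ the index sequence is chosen by the automaton and written nowhere on the tapes, so ``the tape contents are incompatible with extending a match'' is not a property visible in any bounded window: the two heads drift apart unboundedly, and whether the remaining suffixes admit a consistent parse depends on unboundedly long prefixes. A multi-tape automaton's transitions carry finitely many fixed label pairs, so your escape transitions can only test bounded local patterns; you give no indication of what these patterns are, and the forward direction is fragile precisely because your sink $s$ accepts everything --- if an escape is enabled at even one point reachable while parsing the solution-encoding configuration (and intermediate points reached with a ``wrong'' index sequence $j_1\cdots j_k$ will typically look locally just as mismatched as genuinely bad configurations), that configuration is accepted and the reduction fails. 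There is also a concrete problem with the $m\neq m'$ asymmetry: on a configuration whose two tapes both begin with the common word, the synchronous transitions consume $u_{i_k}$ and $v_{i_k}$ in lockstep, so the $u$-parse and $v$-parse finish after different numbers of steps; your state $p$ is supposed to ``absorb'' this, but since $\varepsilon$-transitions are forbidden (every transition reads a nonempty word on \emph{both} tapes), $p$ must consume symbols on the already-finished tape as well, and you do not say what it reads there or how this avoids reintroducing accepting runs on the configuration you need rejected.

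The paper's construction resolves exactly these issues by a different encoding: the first tape carries the index sequence $i_1 i_2\cdots$ and the second tape carries letters, the three states being $X$ (undecided), $U$ (tracking the $u$-side) and $V$ (tracking the $v$-side). Because the index is read from tape~1, the expected word on tape~2 is determined at each step, so a mismatch is detectable within a window of length $|u_i|$ (conditions \ref{item:univred31u}--\ref{item:univred33u} in transitions~\ref{item:univred3}--\ref{item:univred6}), which is what makes finitely many fixed-label ``escape'' transitions sufficient; and a solution is witnessed by the single configuration $i_1\cdots i_m \mid u_{i_1}\cdots u_{i_m}\texttt{\#}$, on which both the $U$-branch and the $V$-branch block at the $\texttt{\#}$. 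There is no universal sink: universality in the no-solution case is proved by a strategy argument (choose between $U$ and $V$ according to where a $\texttt{\#}$ lies ahead). To repair your write-up you would either have to supply a genuinely new mechanism for locally detecting incompatibility in your index-free encoding --- which I do not see how to do with three states --- or switch to an encoding of this kind.
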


\begin{proof}
We reduce prefix-PCP,
which is undecidable thanks to Lemma~\ref{lemm:und_prefixPCP}.
Let $(u_1,v_1)$, \ldots, $(u_n,v_n)$ be an instance of prefix-PCP where the $u_i$, $v_i$ are words over $\mcB$.
We define a $2$-tape automaton $\mcM$ on $3$ states (denoted by $X,U,V$).
The alphabet of $\mcM$ is $A_1 \times A_2$,
with $A_1 = \{1, \ldots, n\}$ and $A_2 = \mcB \cup \{\texttt\#\}$,
where $n$ is the size of the prefix-PCP instance,
$\mcB$ is the alphabet of words $u_i,v_i$
and $\texttt\#$ is a new symbol not in $\mcB$.
The transitions of $\mcM$ are
\begin{enumerate}
\item\label{item:univred1}
$X \stackrel{i | u_i}{\longrightarrow} U$ and $U \stackrel{i | u_i}{\longrightarrow} U$ for every $i \in A_1$;
\item\label{item:univred2}
$X \stackrel{i | v_i}{\longrightarrow} V$ and $V \stackrel{i | v_i}{\longrightarrow} V$ for every $i \in A_1$;
\item\label{item:univred3}
$U \stackrel{i | u}{\longrightarrow} X$ for every $i \in A_1$ and $u \in A_2^+$ such that
    \begin{enumerate}[label=(\roman*)]
    \item\label{item:univred31u} $|u| \leq |u_i|$,
    \item\label{item:univred32u} $u$ is not a prefix of $u_i$,
    \item\label{item:univred33u} $u$ does not begin with $\texttt\#$;
    \end{enumerate}
\item\label{item:univred4}
$X \stackrel{i | u}{\longrightarrow} X$ for every $i \in A_1$ and $u \in A_2^+$
such that~\ref{item:univred31u} and~\ref{item:univred32u} above hold;
\item\label{item:univred5}
$V \stackrel{i | v}{\longrightarrow} X$ for every $i \in A_1$ and $v \in A_2^+$ such that
    \begin{enumerate}[label=(\roman*)]
    \item\label{item:univred31v} $|v| \leq |v_i|$,
    \item\label{item:univred32v} $v$ is not a prefix of $v_i$,
    \item\label{item:univred33v} $v$ does not begin with $\texttt\#$;
    \end{enumerate}
\item\label{item:univred6}
$X \stackrel{i | v}{\longrightarrow} X$ for every $i \in A_1$ and $v \in A_2^+$
such that~\ref{item:univred31v} and~\ref{item:univred32v} above hold.
\end{enumerate}
We now prove that the prefix-PCP instance $(u_1,v_1), \ldots, (u_n,v_n)$
has a solution if and only if state $X$ is \textbf{not} universal in $\mcM$.

($\Rightarrow$)
Suppose that the prefix-PCP instance admits a solution:
there exist $m,m' \geq 1$ and two words $i_1 \cdots i_m$ and $i_1 \cdots i_{m'}$ such that
$u_{i_1} \cdots u_{i_m} = v_{i_1} \cdots v_{i_{m'}}$
and one of the two words $i_1 \cdots i_m$ and $i_1 \cdots i_{m'}$ is a prefix of the other.
Without loss of generality we can assume that $m \geq m'$ and $i_1 \cdots i_{m'}$ is a prefix of $i_1 \cdots i_m$.
We prove that $\mcM$ cannot accept any infinite sequence in $(A_1 \times A_2)^\bbN$ beginning with
\[
i_1 \cdots i_m \ | \ u_{i_1} \cdots u_{i_m}\texttt\#
\]
when starting from state $X$, so $\mcM$ is not universal.
Indeed, let us describe the evolution of $\mcM$ when reading such a sequence.
\begin{itemize}
\item
We start from $X$, so $\mcM$ necessarily uses a transition defined in~\ref{item:univred1} and~\ref{item:univred2}
and moves to state $U$ or $V$ after having read $i_1 | u_{i_1}$ or $i_1 | v_{i_1}$, respectively.
(The other transitions~\ref{item:univred4} and~\ref{item:univred6} cannot be used
because of the conditions~\ref{item:univred31u} and~\ref{item:univred32u}.)
Note that both $u_{i_1}$ and $v_{i_1}$ are prefixes of the content of the second tape.
\item
Now if $\mcM$ is in state $U$, the remaining input starts with some $i$ on the first tape
and starts with $u_i$ on the second tape.
So $\mcM$ must use transition~\ref{item:univred1}: stay in state $U$ and read $i | u_i$.
(Transition~\ref{item:univred3} cannot be used
because of the conditions~\ref{item:univred31u} and~\ref{item:univred32u}.)
The same holds if $\mcM$ is in state $V$.
\end{itemize}
It follows that when $\mcM$ reads $i_1, \ldots, i_{m'}$ on the first tape,
then it is either in state $U$ and has read $u_{i_1} \cdots u_{i_{m'}}$ on the second tape,
or it is in state $V$ and has read
$v_{i_1} \cdots v_{i_{m'}} = u_{i_1} \cdots u_{i_m}$ on the second tape.
In the second case, the next symbol on the second tape is $\texttt\#$,
so $\mcM$ is ``blocked'' on this input
(there is no suitable transition for this sequence because of~\ref{item:univred33u}).
In the first case, the computation must continue in the same way as before,
so eventually $\mcM$ is still in state $U$ and has read
$i_1 \cdots i_m | u_{i_1} \cdots u_{i_m}$,
and again, $\mcM$ is blocked because the next symbol on the second tape is $\texttt\#$.

($\Leftarrow$)
Suppose that no solution exists for the prefix-PCP instance.
The following strategy shows that a move by the automaton can always be made,
whatever its tape contents is.
If $\mcM$ is in state $U$ or $V$, make any available move.
In state $X$, if no loop in $X$ is possible,
then in the current configuration $(i_1i_2 \cdots | w)$,
both $u_{i_1}$ and $v_{i_1}$ must be prefixes of $w$,
otherwise~\ref{item:univred4} or~\ref{item:univred6} could have been used.
Write $w = u_{i_1} w' = v_{i_1} w''$.
Then:

\begin{enumerate}[label=(\alph*)]
\item if $u_{i_1} \cdots u_{i_k} \texttt\#$ is a prefix of $w$ for some $k$, do not go to $U$ by reading $i_1 | u_{i_1}$;
\label{item:stratuniva}
\item if $v_{i_1} \cdots v_{i_k} \texttt\#$ is a prefix of $w$ for some $k$, do not go to $V$ by reading $i_1 | v_{i_1}$;
\label{item:stratunivb}
\end{enumerate}
The only possible ways to be stuck at this point are:
\begin{itemize}
\item $\mcM$ is in state $U$ or $V$ and the next symbol on the second tape is $\texttt\#$;
\item $\mcM$ is in state $X$ and~\ref{item:stratuniva},~\ref{item:stratunivb} prevent from moving to $U$ or $V$.
\end{itemize}
The second case cannot happen because it implies the existence of a prefix-PCP solution.
If we are in the first case, we can assume by symmetry that we are in state $U$.
In the last step where $\mcM$ went from $X$ to $U$,
the configuration must start with $i_1i_2 \cdots | u_{i_1}u_{i_2} \cdots u_{i_k}\texttt\#\cdots$ for some $k$,
because this is the only way to get stuck in $U$ some $k$ steps later.
However, this contradicts the choice made in~\ref{item:stratuniva} above,
because $\mcM$ should have moved to $V$ instead of state $U$.
\end{proof}

\begin{theo}
\label{theo:undunivpref}
It is undecidable whether
a given state of a given $d$-tape automaton admits a universal prefix.
This problem remains undecidable if we restrict to $2$-tape automata with $3$ states.
\end{theo}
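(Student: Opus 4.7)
The plan is to adapt the reduction of Theorem \ref{theo:unduniv}: starting from a prefix-PCP instance, I reuse essentially the same 2-tape, 3-state automaton $\mcM$ with states $\{X, U, V\}$. I would argue that $X$ admits a universal prefix in $\mcM$ if and only if the prefix-PCP instance has no solution, which gives a reduction from prefix-PCP (undecidable by Lemma \ref{lemm:und_prefixPCP}). The ``no solution'' direction is immediate, since no solution implies that $X$ is universal by Theorem \ref{theo:unduniv}, and then $\varepsilon$ is trivially a universal prefix.

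For the converse, suppose a prefix-PCP solution $(i_1 \cdots i_m, i_1 \cdots i_{m'})$ exists with $u_{i_1} \cdots u_{i_m} = v_{i_1} \cdots v_{i_{m'}}$, and let $L = |u_{i_1} \cdots u_{i_m}|$. Given an arbitrary finite word $x \in \mcA^\star$, I take the extension $y$ to be the periodic configuration whose tape~1 has period $i_1 \cdots i_m \alpha$ (with $\alpha \in A_1^{L+1-m}$ some fixed padding) and whose tape~2 has period $u_{i_1} \cdots u_{i_m}\#$. Both periods have length $L + 1$, so the ``bad pattern'' of Theorem \ref{theo:unduniv} recurs every $L + 1$ positions in $xy$, aligned between the two tapes.

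The key structural observation is that every transition of $\mcM$ reads exactly one letter on tape~1. Hence after reading $x$, any partial parse has used exactly $|x|$ transitions and ends at some state $q \in \{X, U, V\}$ with a bounded tape-2 offset $o$; there are thus only finitely many $(q, o)$ to handle. For each such $(q, o)$, I would show that the periodic bad tail forces the parse to eventually commit to a transition of type~(1) or~(2) of the construction in Theorem \ref{theo:unduniv}, sending the state to $U$ or $V$, from which the next $\#$ symbol on tape~2 cannot be consumed. The main technical obstacle is carrying out this case analysis uniformly in the offset $o$, which shifts the tape-2 alignment relative to the transition boundaries; the periodicity of $y$ is precisely what ensures that within at most one full period of $y$ every possible parse has been forced into the ``stuck on $\#$'' situation of the proof of Theorem \ref{theo:unduniv}. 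Since the automaton used is exactly the one of Theorem \ref{theo:unduniv}, the bounds of $2$ tapes and $3$ states are preserved.
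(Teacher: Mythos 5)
There is a genuine gap, and it sits exactly at the point you flag as ``the main technical obstacle.'' Your reduction keeps the unmodified automaton of Theorem~\ref{theo:unduniv} and claims that appending a periodic repetition of the blocking pattern $i_1\cdots i_m \mid u_{i_1}\cdots u_{i_m}\texttt{\#}$ defeats \emph{every} run, whatever state and tape-2 offset it has after the prefix $x$. But the blocking argument of Theorem~\ref{theo:unduniv} only works when the two heads are \emph{aligned} at the start of the pattern: it is only then that transitions~\ref{item:univred4} and~\ref{item:univred6} are unavailable and the run is forced into $U$ or $V$ and onto the $\texttt{\#}$. A misaligned run is not forced to do anything of the sort: in state $X$ the automaton can read $\texttt{\#}$ and any mismatching block via~\ref{item:univred4}/\ref{item:univred6} (condition~\ref{item:univred33u} is not imposed there), and it gets to choose how many tape-2 symbols to consume at each step, so it partially controls its own offset. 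Moreover, since your two periods have the same length $L+1$ and the patterns are aligned \emph{within the configuration}, the relative alignment a run experiences is governed by its own offset modulo $L+1$; a run that keeps its offset away from $0$ (e.g.\ by advancing tape 2 one symbol per step whenever that is legal) may simply never see the pattern aligned, so periodicity does not ``cover all offsets'' as you suggest. Whether such dodging runs can always be excluded depends delicately on the words $u_i,v_i$ (first letters, prefix relations), and nothing in your sketch rules them out; it is not even clear that the biconditional ``$X$ has a universal prefix iff the instance has no solution'' holds for the unmodified automaton.

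This is precisely why the paper does not reuse the automaton of Theorem~\ref{theo:unduniv} as is: it first inserts the separator $\texttt{*}$ after every letter of every $u_i,v_i$, enlarges the alphabet with $\texttt{\&}$, and adds the transitions~\ref{item:univred7}--\ref{item:univred9}. These extra transitions make every run that reaches the blocking zone with the ``wrong'' offset traverse it \emph{deterministically} (no new branching), and the argument is then an induction on the number of accepting runs of an $\texttt{\&}$-padded configuration: one targets the accepting run with minimal head distance $\ell$, places $\texttt{\&}^L s$ and $\texttt{\&}^{L'} t\texttt{\#}$ so that the pattern is aligned exactly for that run, kills it, and repeats until no accepting run survives. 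Your proposal is missing this mechanism (or some substitute for it), so the direction ``prefix-PCP solvable $\Rightarrow$ no universal prefix'' is not established; the easy direction of your reduction is fine.
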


\begin{proof}
We modify the prefix-PCP reduction made in the proof of Theorem~\ref{theo:unduniv}.
Let $(u_1,v_1), \ldots, (u_n,v_n)$ be an instance of prefix-PCP where the $u_i$, $v_i$ are words over $\mcB^\star$.
First we modify the $u_i,v_i$ by adding a new symbol $\texttt*$ not in $\mcB$
after each letter of each $u_i$ and each $v_i$
(a word $x_1x_2 \cdots x_k$ becomes $x_1\texttt*x_2\texttt* \cdots x_k\texttt*$).
This modified instance is clearly equivalent to the original one,
so we denote it again by $(u_1,v_1), \ldots, (u_n,v_n)$.

We now define a $2$-tape automaton $\mcM$ on $3$ states $X,U,V$.
We take the same alphabet $A_1 \times A_2$ as in the other reduction,
with a new symbol $\texttt\&$ for both $A_1$ and $A_2$,
and the symbol $\texttt*$ for $A_2$.
This gives
$A_1 = \{1, \ldots, n\} \cup \{\texttt\&\}$
and $A_2 = \mcB \cup \{\texttt\#, \texttt\&, \texttt*\}$,
where $n$ is the size of the prefix-PCP instance,
$\mcB$ is the alphabet of the words $u_i,v_i$
and $\texttt\#, \texttt\&, \texttt*$ are new symbol not in $\mcB$.
The transitions of $\mcM$ consist of
\begin{itemize}
\item \ref{item:univred1} and~\ref{item:univred2} like in the proof of Theorem~\ref{theo:unduniv},
without allowing any symbol $\texttt\&$ or $\texttt*$;
\item
\ref{item:univred3},~\ref{item:univred4},~\ref{item:univred5},~\ref{item:univred6}
like in the proof of Theorem~\ref{theo:unduniv},
where symbols $\texttt\&$ or $\texttt*$ are allowed, except in the first letter of $u$ or $v$;
\end{itemize}
plus the following transitions:
\begin{enumerate}
\setcounter{enumi}{6}
\item\label{item:univred7}
$X \stackrel{a | \texttt\&}{\longrightarrow} X$,
$U \stackrel{a | \texttt\&}{\longrightarrow} X$ and
$V \stackrel{a | \texttt\&}{\longrightarrow} X$
for every $a \in A_1$;
\item\label{item:univred8}
$X \stackrel{\texttt\& | a}{\longrightarrow} X$,
$U \stackrel{\texttt\& | a}{\longrightarrow} X$ and
$V \stackrel{\texttt\& | a}{\longrightarrow} X$
for every $a \in A_2 \setminus \{\texttt*\}$;
\item\label{item:univred9}
$X \stackrel{a | \texttt*b}{\longrightarrow} X$,
$U \stackrel{a | \texttt*b}{\longrightarrow} X$ and
$V \stackrel{a | \texttt*b}{\longrightarrow} X$
for every $a \in A_1$ and $b \in A_2$.
\end{enumerate}
We now prove that the prefix-PCP instance $(u_1,v_1), \ldots, (u_n,v_n)$
has a solution if and only if state $X$ does \textbf{not} have any universal prefix.

($\Rightarrow$)
Suppose that the prefix-PCP instance has a solution:
there exist $m,m' \geq 1$ and two words $i_1 \cdots i_m$ and $i_1 \cdots i_{m'}$ such that
$u_{i_1} \cdots u_{i_m} = v_{i_1} \cdots v_{i_{m'}}$
and one of the two words $i_1 \cdots i_m$ and $i_1 \cdots i_{m'}$ is a prefix of the other.
Consider the following claim.
\begin{quote}
\textbf{{Claim.}}
Let $x \in A_1^\star$ and $y \in A_2^\star$ be such that
$x \texttt\&\texttt\& \cdots | y \texttt\&\texttt\& \cdots$
is $X$-accepted by at most $k \geq 1$ different runs of $\mcM$.
Then there exist $x' \in A_1^\star$ and $y' \in A_2^\star$ such that
$x x' \texttt\&\texttt\& \cdots | y y' \texttt\&\texttt\& \cdots$
is $X$-accepted by at most $k-1$ different runs.
\end{quote}
This claim implies that $X$ does not have any universal prefix,
\emph{i.e.}, that for every finite words $x \in A_1^\star$ and $y \in A_2^\star$,
there exists a configuration starting with $x | y$ that is not $X$-accepted.
Indeed, for every such $x,y$, there can be only finitely many different accepting runs (say~$k$)
for $x \texttt\&\texttt\& \cdots | y \texttt\&\texttt\& \cdots$,
because $\mcM$ eventually loops on state $X$ with transition $\texttt\& | \texttt\&$.
So it suffices to apply the claim $k$ times to obtain a configuration starting with $x | y$
which is not $X$-accepted.

We now prove the claim, using the prefix-PCP solution.
Let $x \in A_1^\star$ and $y \in A_2^\star$ be such that
$x \texttt\&\texttt\& \cdots | y \texttt\&\texttt\& \cdots$
is $X$-accepted by $k$ different runs.
Denote by $R_1, \ldots, R_k$ the finite prefixes of the $k$ runs,
each cut when $\mcM$ reaches the $\texttt\&\texttt\&\cdots | \texttt\&\texttt\&\cdots$ part.
Let $s = i_1 \cdots i_m \in A_1^\star$
and let $t = u_1 \cdots u_{i_m}$,
which can be written in the form $t = a_1 \texttt* a_2 \texttt* \cdots \texttt* a_{|t|-1} \texttt* \in A_2^\star$,
where each $a_i$ is in $A_2 \setminus \{\texttt\#,\texttt\&,\texttt*\}$,
thanks to the modification made to the instance.

Let $\ell$ be the distance between the two tapes heads when $\mcM$ has completed the finite run $R_1$.
(Note that the first head is always behind the second one because it can only move by one cell at at time.)
Without loss of generality we can assume that $R_1$ is the run for which such an $\ell$ is minimal.
We now construct a configuration $c$ which will ``block'' any run starting with $R_1$,
without giving the other runs any possibilities for new nondeterministic branching.
Let $L,L' \geq 0$ such that $s$ (on the first tape) begins $\ell$ positions behind $t$ (on the second tape)
in the configuration
\[
c \ = \ x \texttt\&^L    s           \texttt\&\texttt\& \cdots
  \ | \ y \texttt\&^{L'} t \texttt\# \texttt\&\texttt\& \cdots,
\]
so that during any run starting with $R_1$,
$\mcM$ starts reading $s$ and $t \texttt\#$ exactly at the same time.
It follows that $R_1$ cannot be extended to an accepting run for $c$,
because $s,t$ corresponds to a prefix-PCP solution,
similarly as in the proof of Theorem~\ref{theo:unduniv}.
The same is true for any other run $R_i$ for which such an $\ell$ is the same as $R_1$.

Let us now consider another accepting run $R_i$.
By minimality of $\ell$, the distance between the two tapes heads when $\mcM$ first reaches
$\texttt\&\texttt\&\cdots | \texttt\&\texttt\&\cdots$
during run $R_i$ is strictly larger than $\ell$.
We now prove that $R_i$ can be extended in a unique way to an accepting run for $c$.
Indeed, any run of $\mcM$ starting with $R_i$ must evolve in the following way:
\begin{itemize}
\item
when $t$ starts being read the second tape, $s$ is not yet being read on the first tape,
so at this time $\mcM$ is reading $\texttt\&$ on the first tape and $a_1$ on the second tape;
\item
the only possible transition is~\ref{item:univred8}, so $\mcM$ moves one step on both tapes,
and is now reading $\texttt*$ on the second tape;
\item
the only possible transition is~\ref{item:univred9}, so $\mcM$ moves one step on the first tape and two steps on the second,
and is again reading $\texttt*$ on the second tape;
\item
this continues until the whole $t = a_1 \texttt* a_2 \texttt* \cdots \texttt* a_{|t|-1} \texttt*$
has been read on the second tape,
and $\mcM$ is deterministically looping on $\texttt\& | \texttt\&$.
\end{itemize}
From this analysis, it follows that $R_i$ can be extended in a \emph{unique} way to an accepting run for $c$.
Hence $c$ is a configuration starting with $x | y$ with at most $k-1$ accepting runs,
because every accepting run for $c$ must start with an $R_i$,
each of which can be extended in at most one way if $i \in \{2, \ldots, k\}$,
or in no way at all if $i = 1$.
Thus the claim is proved by taking
$x' = \texttt\&^L s$ and $y' = \texttt\&^{L'} t \texttt\#$.

($\Leftarrow$)
Suppose that no solution exists for the prefix-PCP instance.
The strategy described in the ``$\Leftarrow$'' direction of the proof of Theorem~\ref{theo:unduniv}
can be applied to prove that state $X$ is universal,
with the additional case that if the tape begins by $\texttt\&$ or $\texttt*$,
then the transition~\ref{item:univred7},~\ref{item:univred8} or~\ref{item:univred9} can always be used.
\end{proof}

\begin{rema}
\label{rema:commonprefix}
In the reduction made in the above proof of Theorem~\ref{theo:undunivpref},
if state $X$ has a universal prefix, then in fact $X$ is universal.
Also, in this case, it is easy to see that any finite word satisfying
\ref{item:univred31u},~\ref{item:univred32u} and~\ref{item:univred33u}
of transition~\ref{item:univred3} is a universal prefix for $U$ (and $V$),
so $X$, $U$ (and $V$) have a common universal prefix
Hence we have the following:
given a $2$-tape automaton $\mcM$ on $3$ states and two states $q,r$ of $\mcM$,
it is undecidable if $q$ and $r$ have a common universal prefix.
\end{rema}

\section{Affine GIFS associated with multi-tape automata}
\label{sect:mtagigs}

Let $\mcM$ be a $d$-tape automaton on alphabet $\mcA = A_1 \times \ldots \times A_d$.
We want to give a ``numerical interpretation'' to a finite word $u \in \mcA^\star$
or to an infinite configuration $c \in \mcA^\bbN$.
We must first specify, for each $k \in \{1, \ldots, n\}$,
a numerical interpretation of the letters of $A_k$
by choosing a bijection $\delta_k : A_k \rightarrow \{0, \ldots, |A_k|-1\}$.
We then define $\Delta_k : A_k^\star \rightarrow \bbR$ by
\[
\Delta_k(u) = \sum_{1 \leq i \leq |u|} \delta_k(u_i) |A_k|^{-i}.
\]
Equivalently, for $u = u_1 \cdots u_n \in A_k^n$,
the number $\Delta_k(u)$ is represented by $0.\delta_k(u_1) \cdots \delta_k(u_n)$ in base $|A_k|$.
Finally, let $\Delta : A_1^+ \times \ldots \times A_d^+ \rightarrow \bbR^d$
be defined by $\Delta(w_1,\ldots, w_d) = (\Delta_1(w_1), \ldots, \Delta_d(w_d))$.
The domains of $\Delta_k$ and $\Delta$ can naturally be extended to $A_k^\bbN$ and $\mcA^\bbN$, respectively.

In the examples that will follow,
if the alphabets $A_k$ are all of the form $\{0, \ldots, |A_k|-1\}$
and the maps $\delta_k : A_k \rightarrow \{0, \ldots, |A_k|-1\}$ are not specified,
we will assume for convenience that they are identity mappings.

\begin{defi}
Let $\mcM$ be a $d$-tape automaton on state set $\mcQ$ and alphabet $\mcA = A_1 \times \cdots \times A_n$.
The \tdef{GIFS associated with $\mcM$} is the GIFS defined by
the graph $G$ with vertex set $\mcQ$
and, for every transition $R = (q,r,(w_1, \ldots, w_d))$ of $\mcM$,
an edge $(q,r)$ labelled by the map $f_R : [0,1]^d \rightarrow [0,1]^d$ defined by
\[
f_R(x) \ = \
\begin{pmatrix}
|A_1|^{-|w_1|} & & 0 \\
  & \ddots & \\
0 & & |A_d|^{-|w_d|}
\end{pmatrix}x
\ + \
\Delta(w_1, \ldots, w_d).
\]
\end{defi}

\begin{exam}
Let $\mcM$ be a $2$-tape automaton on alphabet $\mcA = \{0,1\} \times \{0,1\}$,
and let $c \in \mcA^\bbN$ be configuration.
If $\mcM$ contains a transition $R = (q,r,(1011,11))$,
then applying the contracting map $f_R$ on $\Delta(c) = (0.x_1x_2\ldots, 0.y_1y_2\ldots) \in [0,1]^2$ has the following effect:
\begin{align*}
f_R(\Delta(c))
    &= \begin{pmatrix}1/16 & 0 \\ 0 & 1/4\end{pmatrix}
       \begin{pmatrix}0.x_1x_2\ldots \\ 0.y_1u_2\ldots \end{pmatrix}
       + \Delta(1011,11) \\
    &= \begin{pmatrix}0.0000x_1x_2\ldots \\ 0.00y_1u_2\ldots\phantom{00} \end{pmatrix}
       + \begin{pmatrix}0.1011 \\ 0.11\phantom{00} \end{pmatrix} 
    = \begin{pmatrix}0.1011x_1x_2\ldots \\ 0.11y_1u_2\ldots \phantom{00}\end{pmatrix}.
\end{align*}
This suggests that applying a sequence of mappings $f_{R_1} \cdots f_{R_n}(\Delta(c))$
corresponds to concatenating the words associated with the transitions $R_n$
in the numerical interpretation $\Delta(c)$ of a configuration $c$.
This is the key idea to establish a correspondence
between the GIFS of an automaton and its accepted configurations.
This is formalized in the next proposition.
\end{exam}

\begin{prop}
\label{prop:aut_address}
Let $\mcM$ be a $2$-tape automaton and let $q$ be a state of $\mcM$.
The GIFS attractor of $\mcM$ associated with $q$
is equal to the set
$\{\Delta(c) \in \bbR^d : c \in \mcA^\bbN \text{ is $q$-accepted by } \mcM\}$.
\end{prop}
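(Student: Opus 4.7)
The plan is to apply the uniqueness of GIFS attractors mentioned in the introduction: if one exhibits a family of nonempty compact sets $\{Y_q\}_{q \in \mcQ}$ satisfying the set equation
\[
Y_q = \bigcup_{r \in \mcQ} \bigcup_{R \in E_{q,r}} f_R(Y_r),
\]
then necessarily $Y_q = X_q$ for every $q \in \mcQ$. I set $Y_q := \{\Delta(c) : c \in \mcA^\bbN \text{ is $q$-accepted by } \mcM\}$ and verify the three required properties: the set equation, compactness, and nonemptiness.

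For the set equation, I peel off the first transition of an accepting run. If $c$ is $q$-accepted by the run $(R_n)_{n \geq 1}$ with $R_1 = (q, r, (w_1, \ldots, w_d))$, let $c'$ be the configuration whose $k$th tape is the infinite word $w_{2,k} w_{3,k} \cdots$ produced by the tail $(R_n)_{n \geq 2}$. Then $c'$ is $r$-accepted, and the identity $\Delta_k(\mathrm{tape}_k(c)) = \Delta_k(w_k) + |A_k|^{-|w_k|}\,\Delta_k(\mathrm{tape}_k(c'))$ holds for each $k$; assembling the coordinates gives exactly $\Delta(c) = f_{R_1}(\Delta(c'))$, the computation illustrated in the example preceding the statement. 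Conversely, given any transition $R = (q, r, (w_1, \ldots, w_d))$ and any $r$-accepted $c'$, prepending $R$ to an accepting run of $c'$ produces a $q$-accepted configuration $c$ with $\Delta(c) = f_R(\Delta(c'))$, giving the reverse inclusion.

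For compactness (and nonemptiness), I introduce the space $\Pi_q$ of infinite runs of $\mcM$ starting at $q$, viewed as a subset of $\mcR^\bbN$ with the product topology. The compatibility condition ``$R_{n+1}$ starts where $R_n$ ends'' is closed in the product topology, so $\Pi_q$ is a closed subset of a compact space, hence compact by Tychonoff. The evaluation map $\Phi_q : \Pi_q \to \bbR^d$ sending a run to the value $\Delta(c)$ of its accepted configuration is continuous: if two runs agree on their first $N$ transitions, then the first $N$ letters of each tape of the corresponding configurations coincide, so the coordinatewise distance between the two images is at most $\max_k |A_k|^{-N}$. Since $Y_q = \Phi_q(\Pi_q)$, it is compact, and it is nonempty as soon as $q$ admits at least one infinite outgoing run, which is the implicit nondegeneracy assumption behind the existence of the GIFS attractor (one can always restrict to ``live'' states, since dead states contribute the empty set on both sides).

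The main obstacle is purely bookkeeping: because the words $w_1, \ldots, w_d$ attached to a single transition may have different lengths, the residual configuration $c'$ obtained after peeling off a transition is \emph{not} a shift of $c$ inside $\mcA^\bbN$ (the tapes get desynchronized at the run level, even though they stay aligned as infinite words). Separating the combinatorial data (the run in $\Pi_q$) from the numerical data (the point of $[0,1]^d$) via the map $\Phi_q$ cleanly dissolves this issue and delivers continuity, compactness, and the set equation in a uniform way.
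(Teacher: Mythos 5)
Your proof is correct, but it follows a different route from the paper's. The paper invokes the standard ``address map'' characterization of GIFS attractors (cited from Falconer): a point $x$ lies in $X_q$ if and only if $x = \bigcap_{n\geq 1} f_{R_1}\cdots f_{R_n}([0,1]^d)$ for some infinite run $(R_n)$ starting at $q$, and then simply observes that this nested intersection is exactly $\Delta(c)$ for the configuration $c$ spelled out by the run; acceptance of $c$ is immediate, and the two sets coincide. You instead verify directly that the family $Y_q = \{\Delta(c): c \text{ is } q\text{-accepted}\}$ consists of nonempty compact sets satisfying the GIFS set equation, and conclude by the uniqueness of the attractor. Your peeling-off computation $\Delta(c) = f_{R_1}(\Delta(c'))$ is exactly right, and routing compactness through the run space $\Pi_q \subseteq \mcR^\bbN$ with the continuous evaluation map $\Phi_q$ is a clean way to handle the tape desynchronization you point out (the key being that every transition writes at least one letter on every tape, so agreement on $N$ transitions forces agreement on the first $N$ letters of each tape). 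What the paper's approach buys is brevity --- the coding fact is quoted, and the identification of addresses with $\Delta$-values is a one-line computation; what your approach buys is self-containedness, since you only use the fixed-point uniqueness already stated when GIFS are introduced, at the cost of the explicit compactness and continuity bookkeeping. Your handling of nonemptiness (each vertex must admit an infinite outgoing run, the implicit nondegeneracy hypothesis behind the existence of nonempty attractors) is a fair and correctly flagged caveat; the paper glosses over it as well.
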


\begin{proof}
Let $x \in [0,1]^d$.
It follows from a standard fact in the theory of iterated function systems~\cite[Chapter~9]{Fal03}
that $x \in X_q$ if and only if
there is an infinite run $(R_n)_{n\geq 1}$ starting at $q$
such that $x = \bigcap_{n \geq 1} f_{R_1} \cdots f_{R_n}([0,1]^d)$,
where $f_{R_n}$ is the mapping of the GIFS of $\mcM$ associated with run $R_n$.
Moreover, by definition of the GIFS of $\mcM$,
for every such run $(R_n)_{n\geq 1}$, the configuration
$c = w_{1,1}w_{2,1}\cdots \ | \ \cdots \ | \ w_{1,d}w_{2,d}\cdots$
is such that $x = \Delta(c)$,
where the $w_{n,k}$ are given by the transitions
$(q_n,r_n,(w_{n,1},\ldots,w_{n,d}))$ for all $n \geq 1$,
so the proposition is proved
because $c$ is a $q$-accepted configuration.
\end{proof}

\begin{exam}
Let $\mcM$ be the $1$-state, $2$-tape automaton on alphabet $\{0,1\}$
with transitions $0 | 0$, $0 | 1$, $1 | 0$.
The iterated function system associated with $\mcM$ consists of the maps
$x \mapsto \frac{x}{2}$,
$x \mapsto \frac{x}{2} + (\frac{1}{2},0)$,
$x \mapsto \frac{x}{2} + (0,\frac{1}{2})$
and it can easily be seen that the associated attractor
the Sierpi\'nski triangle (see Figure~\ref{fig:exifs}).
%
\end{exam}

\begin{exam}
\label{exam:punotu_attractor}
The $1$-tape, $1$-state automaton $\mcM$ on alphabet $\{0,1\}$
with three transitions $1$, $10$ and $00$
(given in Example~\ref{exam:punotu})
is an example of a non-universal automaton which admits universal prefixes.
This reflects in the attractor associated with $\mcM$ in the following way:
it is not equal to $[0,1]$ but it has nonempty interior.
This can be proved either by Proposition~\ref{prop:aut_to_topo},
or by proving directly that a configuration $x \in \{0,1\}^\bbN$
is accepted by $\mcM$ if and only if it does not start
with $0^{2k+1}1$ for some $k \geq 0$,
which implies that the attractor is equal to $\bigcup_{k \geq 0} [2^{-2k-1}, 2^{-2k}]$.
\end{exam}

\begin{rema}
\label{rema:finitetoone}
Given a $d$-tape automaton and a point $x \in [0,1]^d$,
if there exists two configurations $c,c'$ such that $x = \Delta(c) = \Delta(c')$
and such that the tapes components $c_k \in A_k^\bbN$ and $c'_k \in A_k^\bbN$ differ for some $k \in \{1, \ldots, d\}$,
then $c_k$ and $c_k'$ are both stationary, ending with $0^\omega$ or $(|A_k|-1)^\omega$.
In particular, $\Delta : \mcA^\bbN \rightarrow \bbR^d$ is finite-to-one.
\end{rema}

The next proposition establishes the desired correspondence
between word-theoretical properties of multi-tape automata
and topological properties of the associated self-affine attractors.

\begin{prop}
\label{prop:aut_to_topo}
Let $\mcM$ be a $d$-tape automaton on alphabet $\mcA$,
let $q$ be a state of $\mcM$,
and let $X_q$ be the associated GIFS attractor.
We have:
\begin{enumerate}
\item
\label{item:toprop1}
$q$ is universal if and only if $X_q = [0,1]^d$,
\item
\label{item:toprop2}
$q$ has a universal prefix if and only if $X_q$ has nonemtpy interior.
\end{enumerate}
\end{prop}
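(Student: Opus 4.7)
The plan is to reduce both equivalences to Proposition~\ref{prop:aut_address}, which identifies $X_q$ with $\Delta(\mcL_q)$, where I write $\mcL_q \subseteq \mcA^\bbN$ for the set of $q$-accepted configurations, combined with the finite-to-one information of Remark~\ref{rema:finitetoone}. Along the way I will use two auxiliary facts. First, $\Delta : \mcA^\bbN \to [0,1]^d$ is surjective, which is immediate from base-$|A_k|$ expansions. Second, $\mcL_q$ is closed in the product topology on $\mcA^\bbN$: this follows from a standard compactness argument, since given accepted configurations $c^{(n)} \to c$ with accepting runs $\pi^{(n)} \in \mcR^\bbN$, a subsequence of the $\pi^{(n)}$ converges in the compact space $\mcR^\bbN$ ($\mcR$ being finite), and its limit is an accepting run for $c$.

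For part~\ref{item:toprop1}, the direction $\Rightarrow$ is immediate: universality means $\mcL_q = \mcA^\bbN$, so Proposition~\ref{prop:aut_address} and surjectivity of $\Delta$ yield $X_q = [0,1]^d$. For the converse I argue contrapositively. If $q$ is not universal, pick some $c_0 \notin \mcL_q$; closedness of $\mcL_q$ provides a cylinder $[u]$ around $c_0$ disjoint from $\mcL_q$. Inside $[u]$ I select a configuration $c'$ whose tapes are non-stationary (none of them eventually equal to $0^\omega$ nor to $(|A_k|-1)^\omega$); by Remark~\ref{rema:finitetoone}, $c'$ is the only $\Delta$-preimage of $\Delta(c')$, so $\Delta(c') \in X_q$ would force $c' \in \mcL_q$, a contradiction. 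Hence $\Delta(c') \notin X_q$, showing $X_q \subsetneq [0,1]^d$.

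For part~\ref{item:toprop2}, the direction $\Rightarrow$ is straightforward: if $x$ is a universal prefix then $[x] \subseteq \mcL_q$, and $\Delta([x])$ is a product of nondegenerate intervals, so $X_q$ has nonempty interior. For the converse, I use the fact that the closed boxes $\Delta([u])$, as $u$ ranges over $\mcA^\star$, have diameters shrinking to zero with $|u|$, so nonempty interior of $X_q$ yields some $u$ with $\Delta([u]) \subseteq X_q$. Every $c \in [u]$ whose tapes are non-stationary is its own unique $\Delta$-preimage, hence must lie in $\mcL_q$; these configurations form a dense subset of $[u]$. Since $\mcL_q$ is closed, all of $[u]$ lies in $\mcL_q$, so $u$ is a universal prefix for $q$.

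The main obstacle, and the only step that is not a routine unpacking of definitions, is this final upgrade from \emph{dense in $[u]$} to \emph{all of $[u]$} in the $\Leftarrow$ direction of~\ref{item:toprop2}. Its linchpin is the closedness of $\mcL_q$ noted at the outset, which is where I would spend most of the care; once that is in place, the rest falls out directly from Proposition~\ref{prop:aut_address} and Remark~\ref{rema:finitetoone}.
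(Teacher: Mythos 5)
Your proof is correct and follows essentially the same route as the paper: both rest on the identification $X_q = \Delta(\mcL_q)$ from Proposition~\ref{prop:aut_address}, the choice of expansions with non-stationary tapes via Remark~\ref{rema:finitetoone}, and a compactness (K\"onig-type) argument on runs, which the paper invokes as ``there is a prefix of $c$ none of whose extensions is accepted'' and which you package equivalently as closedness of the accepted set $\mcL_q$. Your density-plus-closedness finish for the converse of~\ref{item:toprop2} is just a mild reorganization of the paper's ``reasoning similar to part~(1)'', and it is sound.
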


\begin{proof}
\ref{item:toprop1}
If state $q$ is universal
the expansion of every element of $[0,1]^d$ is $q$-accepted
so $X_q = [0,1]^d$ thanks to Proposition~\ref{prop:aut_address}.
Conversely, suppose that there exists an infinite sequence $c$ that is not $q$-accepted.
By a compactness argument, there must exist a prefix $w$ of $c$ such that
$wc'$ is not $q$-accepted for any infinite sequence $c'$.
Thanks to Remark~\ref{rema:finitetoone},
by choosing $c'$ with no tape components ending by $0^\omega$ or $(|A_k|-1)^\omega$,
the sequence $wc'$ is the \emph{only} sequence such that $x = \Delta(wc')$,
so $\Delta(wc') \notin X_q$ because otherwise $wc'$ would be $q$-accepted.
It follows that $X_q \neq [0,1]^d$.

\ref{item:toprop2}
For a finite word $w \in \mcA^\star$, define the \emph{cylinder} $[w]$
to be equal to the set of configurations that start with $w$.
If $q$ admits a universal prefix $w$, then $\Delta([w]) \subseteq X_q$ by Proposition~\ref{prop:aut_address},
so $X_q$ has nonempty interior.
Conversely, suppose that there exists a nonempty open set $U \subseteq X_q$,
and let $w \in \mcA^\star$ be a finite word such that $\Delta([w]) \subseteq U$.
By a reasoning similar as in the proof of~\ref{item:toprop1},
we can prove that $w$ is a universal prefix for $q$.
\end{proof}

\section{Undecidability results for self-affine sets}
\label{sect:mainresults_undfrac}

Thanks to the undecidability results obtained for multi-tape automata in Theorem~\ref{theo:unduniv}
and to the correspondence between word-theoretical and topological properties in Proposition~\ref{prop:aut_to_topo},
we obtain the following undecidability results about topological properties of self-affine attractors.

The first result below states that it is undecidable if an attractor ``takes up the whole space'',
that is, equals $[0,1]^d$.
It follows directly from Theorem~\ref{theo:unduniv}
and Proposition~\ref{prop:aut_to_topo},~\ref{item:toprop1}.

\begin{theo}
\label{theo:und_eqsquare}
The following problem is undecidable.
Instance: a $d$-dimensional affine GIFS $\mcG$ specified by maps with rational coefficients,
and a state $q$ of $\mcG$.
Question: is $X_q = [0,1]^d$?
This problem remains undecidable if we restrict to $2$-dimensional GIFS with $3$ states.
\end{theo}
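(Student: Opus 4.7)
The plan is to give a direct reduction from the problem shown undecidable in Theorem~\ref{theo:unduniv} to the problem stated here, using the correspondence established in Proposition~\ref{prop:aut_to_topo}.

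First I would observe that the construction of the GIFS associated with a multi-tape automaton $\mcM$ from Section~\ref{sect:mtagigs} is effective and produces a GIFS whose edge-maps have rational coefficients. Indeed, each transition $R = (q, r, (w_1, \ldots, w_d))$ of $\mcM$ yields a diagonal scaling matrix with entries $|A_k|^{-|w_k|} \in \bbQ$ together with a translation vector $\Delta(w_1, \ldots, w_d) \in \bbQ^d$, since each coordinate $\Delta_k(w)$ is a finite sum of rationals $\delta_k(u_i) |A_k|^{-i}$. Moreover, the resulting GIFS has exactly the same state set as $\mcM$, and its dimension equals the number of tapes of $\mcM$.

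Next, starting from an arbitrary $2$-tape automaton $\mcM$ with $3$ states and a distinguished state $q$, I would build the associated GIFS $\mcG$, which is $2$-dimensional, has $3$ states, and rational coefficients. By Proposition~\ref{prop:aut_to_topo},~\ref{item:toprop1}, the state $q$ is universal in $\mcM$ if and only if the corresponding attractor $X_q$ of $\mcG$ equals $[0,1]^2$. Thus a decision procedure for ``$X_q = [0,1]^d$'' (even restricted to the $2$-dimensional, $3$-state, rational case) would immediately yield one for universality in $2$-tape, $3$-state automata, contradicting Theorem~\ref{theo:unduniv}.

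There is no real obstacle here: the substantive work has already been carried out in the proof of Theorem~\ref{theo:unduniv} and in Proposition~\ref{prop:aut_to_topo}, so the remaining task is just the routine verification that the ``automaton $\rightarrow$ GIFS'' translation preserves the number of tapes/dimensions, the number of states, and the rationality of coefficients, which it clearly does.
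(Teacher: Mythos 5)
Your proposal is correct and follows the paper's own argument exactly: the paper derives Theorem~\ref{theo:und_eqsquare} directly from Theorem~\ref{theo:unduniv} together with Proposition~\ref{prop:aut_to_topo},~\ref{item:toprop1}, via the associated GIFS of a $2$-tape, $3$-state automaton, which (as you note) is $2$-dimensional, has $3$ states and rational coefficients. Nothing is missing; your explicit verification of the rationality and of the preservation of states/tapes is just the routine detail the paper leaves implicit.
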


The next result states the undecidability of a fundamental topological property
for self-affine sets: having empty interior.
It is a direct corollary of Theorem~\ref{theo:undunivpref}
and Proposition~\ref{prop:aut_to_topo},~\ref{item:toprop2}.

\begin{theo}
\label{theo:und_emptyint}
The following problem is undecidable.
Instance: a $d$-dimensional affine GIFS $\mcG$ specified by maps with rational coefficients,
and a state $q$ of $\mcG$.
Question: does $X_q$ have empty interior?
This problem remains undecidable if we restrict to $2$-dimensional GIFS with $3$ states.
\end{theo}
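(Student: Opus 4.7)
The plan is to obtain this result as a direct corollary of Theorem~\ref{theo:undunivpref} by going through the automaton-to-GIFS construction of Section~\ref{sect:mtagigs} and the equivalence in Proposition~\ref{prop:aut_to_topo}\ref{item:toprop2}. Concretely, I would reduce the universal prefix problem for $2$-tape, $3$-state automata (which is undecidable by Theorem~\ref{theo:undunivpref}) to the problem of deciding whether the attractor of a $2$-dimensional, $3$-state affine GIFS with rational coefficients has empty interior.

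First, given an instance $(\mcM,q)$ of the universal prefix problem, where $\mcM$ is a $2$-tape, $3$-state automaton on alphabet $A_1 \times A_2$, I would form the GIFS $\mcG$ associated with $\mcM$ as in the definition preceding Proposition~\ref{prop:aut_address}. Since the underlying directed graph of $\mcG$ has the same vertex set as $\mcM$, the resulting GIFS has $3$ states and lives in dimension $d=2$. Applying Proposition~\ref{prop:aut_to_topo}\ref{item:toprop2} then gives that $q$ admits a universal prefix in $\mcM$ if and only if the attractor $X_q$ of $\mcG$ has nonempty interior; equivalently, $X_q$ has empty interior if and only if $q$ has no universal prefix. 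A decision procedure for the empty-interior problem would therefore solve the universal prefix problem, contradicting Theorem~\ref{theo:undunivpref}.

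The only point that needs to be checked carefully, and which I anticipate being the only non-mechanical step, is the \emph{rationality} of the GIFS maps produced by the construction. For every transition $R=(q,r,(w_1,w_2))$ of $\mcM$, the associated map
\[
f_R(x) \;=\; \begin{pmatrix} |A_1|^{-|w_1|} & 0 \\ 0 & |A_2|^{-|w_2|} \end{pmatrix} x \;+\; \Delta(w_1,w_2)
\]
has diagonal entries $|A_k|^{-|w_k|} \in \bbQ$ and translation vector whose coordinates $\Delta_k(w_k) = \sum_{i=1}^{|w_k|} \delta_k((w_k)_i)\,|A_k|^{-i}$ are finite sums of rationals and hence rational. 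Moreover, these coefficients are effectively computable from the transition, so the construction $\mcM \mapsto \mcG$ is a computable reduction producing a legal instance of the problem in the statement.

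Putting the three ingredients together (effective construction of $\mcG$, rational coefficients, and the equivalence from Proposition~\ref{prop:aut_to_topo}\ref{item:toprop2}) yields the undecidability of the empty-interior problem for $2$-dimensional, $3$-state affine GIFS with rational coefficients, as claimed. No additional work beyond the automaton-theoretic undecidability of Theorem~\ref{theo:undunivpref} and the already-established correspondence should be required.
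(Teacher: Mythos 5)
Your proposal is correct and follows exactly the route the paper takes: Theorem~\ref{theo:und_emptyint} is obtained as a direct corollary of Theorem~\ref{theo:undunivpref} together with Proposition~\ref{prop:aut_to_topo}~\ref{item:toprop2}, via the GIFS associated with the $2$-tape, $3$-state automaton. Your extra check that the maps $f_R$ have rational (and computable) coefficients is a sensible explicit verification of a point the paper leaves implicit.
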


\begin{rema}
\label{rema:intpow}
All the undecidability results above have been obtained via a reduction using
affine GIFS associated with a multi-tape automaton.
Hence it follows that undecidability holds even if we restrict
to affine GIFS in which the linear part of the contractions $f_i$
are diagonal matrices whose entries are negative powers of integers.
By adding dummy duplicate symbols,
undecidability holds even if the entries are negative powers of two.
\end{rema}

\begin{rema}
We can deduce from Remark~\ref{rema:commonprefix} that the following problem is undecidable.
Instance: a $d$-dimensional affine GIFS $\mcG$ specified by maps with rational coefficients,
and two states $q, r$ of $\mcG$.
Question: does $X_q \cap X_r$ have empty interior?
Indeed, it can be shown that $q$ and $r$ have a common universal prefix if and only if $X_q \cap X_r$ has nonemtpy interior,
similarly as in Proposition~\ref{prop:aut_to_topo}.
\end{rema}

\section{Conclusion}
\label{sect:conclu_undfrac}

We conclude this article by some questions and perspectives for further work.
Is nonempty interior decidable for $1$-state GIFS?
(That is, for classical affine IFS.)
What about the $1$-dimensional case?
Using multi-tape automata may lead to an undecidability result for the $1$-state case,
but for not for the $1$-dimensional case.
Indeed, $1$-tape automata are not more powerful than classical finite automata,
for which the properties we used in this article are all decidable.
Note that for $1$-state multi-tape automata, universality is trivially decidable,
but the status of prefix-universality is not known in this case.

Also, let us note that having nonempty interior is equivalent to having nonzero Lebesgue measure
in the case of integer self-affine tiles (as mentioned in the introduction),
but not in the more general setting of self-affine (G)IFS (see for example~\cite{CJPPS06}).
How do these properties relate in the case of self-affine sets arising from multi-tape automata?

Another interesting aspect is the computability of fractal dimension (such as Hausdorff dimension).
For example, can we decide if the Hausdorff dimension of a $2$-dimensional self-affine set is equal to $2$?
And in the case of a self-affine set with nonempty interior, can we compute the Hausdorff dimension of its boundary?
Very few results are known in this direction, apart from some very specific families
such as Bedford-McMullen carpets~\cite{Bed84,McM84,Fra13}.
A possible approach towards undecidability would be to adapt the reductions of this article
in such a way that the Hausdorff dimension can be controlled in the reductions,
or to relate the entropy of the automaton language with the Hausdorff dimension of its attractor
and prove that entropy is uncomputable.

\bibliographystyle{amsalpha}
\bibliography{biblio}
\end{document}